\DeclareMathOperator{\Ibes}{I} %%% BESSEL I FUNCTION
\DeclareMathOperator{\Abes}{A} %%% BESSEL A FUNCTION
\DeclareMathOperator{\Ln}{Ln} %%% COMPLEX LOGARITHM
\DeclareMathOperator{\rng}{rng} %%% RANGE
\newcommand{\vz}[1]{{\vv z}_{#1}} %%% VORTEX LOCATION
\newcommand{\vzd}[1]{{\dot{\vv z}}_{#1}} %%% VORTEX LOCATION WITH DOT
\newcommand{\vzb}[1]{\bar{{\vv z}}_{#1}} %%% VORTEX LOCATION CC
\newcommand{\dvz}{\md\upsilon(\vv z)} %%% VOLUME ELEMENT IN COMPLEX PLANE
\newcommand{\odn}{\hat{\od}{[\vv\dr]}} %%% ORIENTATION DENSITY WITH GIVEN ORDER PARAMETER
\newcommand{\odnh}{\hat{\od}{[\hat{\vv\dr}]}} %%% ORIENTATION DENSITY WITH GIVEN ORDER PARAMETER WITH HAT
\newcommand{\odh}{\hat\od} %%% ORIENTATIONS DENSITY WITH HAT
\newcommand{\dreq}{\dr_{\mathrm{eq}}} %%% EQUILIBRIUM ORDER PARAMETER
\newcommand{\cphi}{\Phi} %%% ITALISIZED PHI
\newcommand{\cpsi}{\Psi} %%% ITALISIZED PSI
\newcommand{\eps}{\epsilon} %%% EPSILON -- ELASTICITY THING
\newcommand{\odc}{c} %%% ORIENTATION DENSITY WITH GIVEN ORDER PARAMETER
\DeclareMathOperator{\sO}{{\scriptstyle\mathcal O}} %%% SMALL O
\newcommand{\separate}{\noindent\raisebox{0.5ex}{\rule{4in}{0.5pt}}} %%% SEPARATOR
\newcommand{\mybox}[3]{\boxed{\raisebox{#1}{\rule{0pt}{#2}}\quad #3\quad}} %%% BOXED EQUATION
\begin{document}
%
%
%%%%%%%%%%%%%%%%%%%%%%%%%%%%%%%%% FRONT MATTER
%
\iftitle{Diffusive transport in two-dimensional nematics}
\ifauthor{Ibrahim Fatkullin}{Department of Mathematics, University of Arizona, Tucson, AZ 85721, USA}
\ifauthor{Valeriy Slastikov}{Department of Mathematics, University of Bristol, Bristol BS8 1TW, UK}

\ifdate

\ifabstract{We discuss a dynamical theory for nematic liquid crystals describing the stage of evolution in which the hydrodynamic fluid motion has already equilibrated and the subsequent evolution proceeds via diffusive motion of the orientational degrees of freedom. This diffusion induces a slow motion of singularities of the order parameter field. Using asymptotic methods for gradient flows, we establish a relation between the Doi-Smoluchowski kinetic equation and vortex dynamics in two-dimensional systems. We also discuss moment closures for the kinetic equation and Landau-de Gennes-type free energy dissipation.}
%
%
%
%
%%%%%%%%%%%%%%%%%%%%%%%%%%%%%%%%% END FRONTMATTER
\section{Introduction} %%%%%%%%%%%%%%%%%%%%%%%% INTRODUCTION

Dynamics of liquid crystalline systems is traditionally described in a framework of theories combining fluid dynamics equations, constitutive relations between the hydrodynamic stress tensor and liquid crystalline order parameters, and  evolution equations for the latter \cite{BerEdw}, \cite{DoiEdw}, \cite{Erick61}, \cite{Lesl68}. In the absence of hydrodynamic motion, the relaxation of the orientational degrees of freedom is induced by the free energy dissipation. This relaxation is generally slow and can be characterized by the evolution of topological defects in the order parameter fields. Our goal in this work is to derive equations of motion for these defects starting from a kinetic Doi-Smoluchowski-type equation \cite{DoiEdw}. To accomplish this task we use asymptotic methods for gradient flows similarly to the way it is used in the Ginzburg-Landau theory \cite{Beth94}, \cite{E94}, \cite{Lin96}, \cite{Neu90}, \cite{San04}. We omit most of the technical details concentrating rather on the methodology and final results. Additionally, we discuss the possibility of describing the dissipative dynamics in terms of the order parameter fields, similar, to, e.g., Landau-de Gennes theory \cite{dGen95}. We also discuss the extent to which the formal moment closures provide the correct evolution equations.

%\paragraph{Ericksen-Leslie} equations of motion describe the dynamics of the liquid crystalline system through the coupling of the liquid crystal flow with dynamics of the director $n$. The equations consist of generalized Navier-Stokes equations for complex fluid velocity ${\bf v}$ in uniaxial media and an equation of evolution for Oseen-Frank director $n$. The magnitude of the director $n$ is constant and therefore Ericksen-Leslie model is not appropriate for  describing the dynamics of topological defects.
%
%\paragraph{Berris-Edwards model.}
%In order to study dynamics of topological defects one can use Beris-Edwards model for liquid crystal dynamics. In this model the state of liquid crystalline system is modelled using $Q$-tensor order parameter.  Coupled with the modified Navier-Stokes equations for the fluid velocity field ${\bf v}$ this description allows to model a motion of defect cores. If velocity field is assumed to be zero this model coincides with the gradient flow for Landau-de Gennes energy.

We choose the Doi-Smoluchowski (D-S) model \cite{DoiEdw} as the starting point for our analysis, because, in some respect, it is a microscopic theory, in comparison to, e.g., Ericksen-Leslie \cite{Erick61}, \cite{Lesl68}, or Beris-Edwards models  \cite{BerEdw}.  In the D-S theory, the state of a liquid crystalline system is described by means of a probability density of rods orientations; and the D-S equations are kinetic equations for this density. The other aforementioned models are based on description via its various moments, and should, in principle, be derived from a D-S-type model. Mathematically, the D-S equations describe gradient flow dynamics for the Onsager-Maier-Saupe free energy \cite{Fat07a} in Wasserstein metric \cite{Vill03}. This makes analysis of the system amenable to methods of the theory of gradient flows \cite{Ambro05}.

In this paper we are interested in the two-dimensional model.  One of the characteristic features of two-dimensional systems is that the energy of topological singularities, or {\em vortices,} diverges logarithmically in meaningful asymptotic limits. The Doi dynamics of orientation density reduces directly to the vortex dynamics. Due to this, it is impossible to find a nontrivial regime in which a Landau-de Gennes-type gradient flow evolution for the second moment can be derived from the D-S model. This is different from the three-dimensional theory, where one can reduce the D-S dynamics to equations for the second moment \cite{E06}, \cite{Wang13}.

\paragraph{The paper is organized as follows:} We start by reviewing the two-dimensional spatially extended Onsager-Maier-Saupe free energy, introduced in our earlier works \cite{Fat07a}, \cite{Fat09a}. Understanding the landscape of this free energy provides us with characterization of the states which posses ``moderate'' amounts of free energy, as compared to the ground, uniform nematic state; we call such states {\em tempered.} These states are uniquely characterized by the location of vortices, and some auxiliary function, so that the evolution of tempered states may be completely described by the evolution of these quantities. We then set up the D-S dynamics as a gradient flow dynamics for our free energy, and carry out an asymptotic reduction. In order to explain the methods and ideas of this reduction we consider a finite-dimensional example on a rather rigorous level. After that, we implement an analogous procedure for the infinite-dimensional system, deriving equations governing the vortex motion. Finally, we derive an infinite hierarchy of equations for moment of the orientation density, and discuss possible closures.

\section{Review of the spatially-extended Onsager-Maier-Saupe model}
The main goal of the next two sections is to familiarize the reader with spatially extended Onsager-Maier-Saupe model \cite{Fat07a,Fat09a}, and to state the principal results of this paper in its context. As mentioned in the Introduction, we specialize to two-dimensional systems.   

In the framework of this model, the state of a liquid crystalline system is characterized by the space-dependent orientation probability density of nematic molecules, $\od(\varphi,\vv z)$ integrating to unity over $\varphi$ for each $\vv z\in\dom$. Here $\varphi\in[0,2\pi)$ is the orientation parameter of liquid crystalline molecules, and $\vv z\in\dom\subset\C$ is a spatial variable. Note that we employ complex notation $\vv z=x+\mi y$ for the spatial coordinates, as this simplifies many calculations. Refer to appendix for additional details. 

\paragraph{The free energy}
of the liquid crystalline system, $\eNrg(\od)$, is a functional of orientation probability density $\od(\varphi,\vv z)$ and is represented as an integral over the spatial domain $\dom$ of the sum of two contributions:
\begin{equation}\label{eq:fnrg2d}
	\eNrg(\od)\,=\,\int_\dom\Big[\,\fNrg_\ore(\od)+\fNrg_\nloc(\od)\,\Big]\dvz.
\end{equation}
Hereafter we use $\dvz=\md x\md y$ to denote the volume element in $\dom$.

The {\em orientational free energy density\,} $\fNrg_\ore$ is an Onsager-type functional,
\begin{align}\label{eq:ons2d}
	\fNrg_\ore(\od)=\int_0^{2\pi}\od(\varphi,\vv z)\,\ln\!\big[2\pi\od(\varphi,\vv z)\big]\,\md{\varphi}\,-
	\frac{\gamma}{2}\iint_0^{2\pi}\cos2(\varphi-\varphi^\prime)\,\od(\varphi,\vv z)\od(\varphi^\prime,\vv z)\,
	\md{\varphi}\md{\varphi^\prime}+\,C_{\gamma},
\end{align}
where the constant $C_{\gamma}$ is chosen to have $\fNrg_\ore\geq0$. The positive parameter $\gamma$ is referred to as {\em concentration}.

The {\em elastic free energy density} is a quadratic functional of the {\em order-parameter} field equivalent to that of the Landau-de~Gennes theory:
\begin{align}\label{eq:els_iso}
	\fNrg_\nloc(\od)\,=\,\frac{\eps^2}{2}\,|\!\grd\vv{\dr}(\vv z)|^2.
\end{align}
Here  $\vv{\dr}(\vv{z})$ is the order parameter field  related to the orientation probability density function, $\od$, via
\begin{equation}\label{eq:dir_od}
	\vv{\dr}(\vv{z})\,=\,\int_0^{2\pi}\,\me^{\,2\mi\varphi}\od(\varphi,\vv z)\md\varphi.
\end{equation}
The positive parameter $\eps^2$ in equation (\ref{eq:els_iso}) is called the {\em elastic modulus}.

\paragraph{A useful observation} is that the total free energy (\ref{eq:fnrg2d}) may be decomposed in the following way:
\begin{align}\label{eq:nrg_decomp}
	\eNrg(\od)\,%&=\,\,+\,\sNrg(\od|\odn)\\
	&=\,\int_\dom\Big[\,\frac{\eps^2}{2}\,\big|\!\grd\vv{\dr}\big|^2\,+\,W^{\gamma}(\dr)\Big]\md\vv z\,+\,\int_\dom\sNrg(\od|\odn)\dvz.
\end{align}
Here for a given probability density $\od$, the order parameter field, $\vv\dr$, is defined in \eqref{eq:dir_od}. The potential $W^{\gamma}$ is given by
\begin{equation}\label{eq:pot_n}
	W^{\gamma}(\dr)\,=\,-\,\frac{\gamma\dr^2}{2}\,+\,
	\big[\dr\Abes(\dr)\,-\,\ln\Ibes_0(\Abes(\dr))\big]\,+\,C_{\gamma},
\end{equation}
where we use notation $\dr =| \vv{\dr} |$. Some of the properties of $W^{\gamma}(\dr)$ and related special functions are presented in Appendix~\ref{sec:wgamma}. In particular, we must pick $\gamma>2$ to assure the existence of nematic states.

The {\em locally-equilibrated} probability density  $\odn$ in equation (\ref{eq:nrg_decomp}) is related to $\vv\dr$ via
\begin{equation}\label{eq:fam2d}
	\odn(\varphi,\vv z)=\frac{\exp\big\{\abes\!\big(\dr(\vv z)\big)\,\cos\big(2\varphi-\arg\vv\dr(\vv z)\big)\big\}}
	{2\pi\ibes_0\!\big(\!\abes(\dr)\big)};
\end{equation}
and $\sNrg(\od|\odn)$ is the relative entropy of $\od$ with respect to $\odn$, \ie,
\begin{equation}
	\sNrg(\od|\odn)\,=\,\int_0^{2\pi}\ln\frac{\od(\varphi)}{\odn(\varphi)}\;\od(\varphi)\md\varphi.
\end{equation}
The field $\odn$ has a straightforward interpretation: whenever $\od=\odn$,  $\sNrg(\od|\odn)=0$, and thus $\odn$ minimizes the total free energy in the class of all fields with prescribed order-parameter $\vv\dr(\vv z)$.

We define the {\em reduced free energy}\, $\nNrg(\vv{\dr})$  as a functional of the order parameter $\vv\dr$ alone,
\begin{equation}\label{eq:nrg_decompose}
	\nNrg(\vv{\dr})\,=\,\int_\dom\Big[\,\frac{\eps^2}{2}\,\big|\!\grd\vv{\dr}\big|^2\,+\,W^{\gamma}(\dr)\Big]\dvz
\end{equation}
and notice its similarity to the Ginzburg-Landau energy. Thus, from \eqref{eq:nrg_decomp} we see that the total liquid crystalline energy, $\eNrg(\od)$, is decomposed in the sum of a Ginzburg-Landau-type energy, $\nNrg(\vv{\dr})$, and the relative entropy  $\sNrg(\od|\odn)$. This allows us to obtain an asymptotic limit in which the orientation density, $\od$, becomes enslaved to its second moment, $\vv\dr$ via equation (\ref{eq:fam2d}).

\paragraph{Multi-vortex patterns} are configurations of the order parameter field, $\vv\dr$, which appear in the limits as $\gamma\to\infty$ or $\eps\to0$. In this work we are interested in the dynamics of liquid crystalline systems in the limit as $\eps \to 0$. Even though this particular limit was not considered in \cite{Fat09a}, the results below can be easily obtained by combining analysis in \cite{Fat09a} and results presented in \cite{San98} for the Ginzburg-Landau energy. 

Consider a family of order parameter fields, $\vv\dr^\epsilon(\vv z)$, which satisfy the boundary condition,
\begin{equation}\label{eq:bc_dir}
	\vv\dr^\epsilon(\vv z)\,=\,\dr_\eq\exp\{\mi\cpsi(\vv z)\},\quad\vv z\in\pd\dom,
\end{equation}
where $\dr_{\eq}$ is the minimizer of $W^\gamma (\dr)$. Assume that in the limit, as  $\eps \to 0$, the energy of these order parameter fields satisfies the bound
\begin{equation}\label{eq:tempered}
	\frac{1}{\eps^2}\nNrg(\vv\dr^\eps) \,\leq\, \pi \dr_{\eq}^2\,|d|\ln{\eps}+C,
\end{equation}
where $d=\deg\vv\dr^\eps|_{\pd\dom}$; $C$ is a positive constant independent of $\eps$. (Such configurations are called {\em tempered}.)
Then, as $\eps\to0$, $\vv\dr^\eps(\vv z)$, converge in appropriate sense (up to a subsequence) to
\begin{equation}\label{eq:th_lim_dir}
	\quad \vv\dr_*(\vv z)\,=\,\dreq\exp\big\{\mi\cphi(\vv z)\,+\,\mi\,\sgn d\sum_{k=1}^d\arg(\vv z-\vz{k})\big\},
\end{equation}
where $\cphi(\vv z)$ is a function with finite Dirichlet energy. Such a field $\vv\dr_*(\vv z)$ is a particular example of the so-called {\em multi-vortex field,} which in general may be represented as  
\begin{equation}\label{eq:th_lim_dir_gen}
	\vv\dr_*[\vz{1},\ldots,\vz{N};d_1,\ldots,d_N](\vv z)\,=\,\dreq\exp\bigg\{\mi\cphi(\vv z)\,+\,\mi\,\sum_{k=1}^Nd_k\arg(\vv z-\vz{k})\bigg\}.
\end{equation}
In this paper we allow the vortices to have different degrees $d_k=\pm1$. It is possible to show, that the results valid for tempered states $\vv\dr^\epsilon(\vv z)$ hold in the setting when vortices have degrees $d_k=\pm1$, provided $\vv\dr^\epsilon(\vv z)$ stays close to a multi-vortex configuration \eqref{eq:th_lim_dir_gen}. In this case we still refer to such states as tempered.  In particular, the limiting equations for the vortex dynamics remain valid until vortices of different signs approach each other (or the boundary) and undergo collision-annihilation process that we do not discuss here. Until that, the structure of all tempered states is completely characterized by the degrees and locations of vortices, $d_k$, $\vz{k}$, $k=1,\ldots,N$, and the function $\cphi(\vv z)$.

%%%%%%%%%%%%%% JUSTIFYING THE ANZATS

Our derivation of the limiting equations relies on the lower bound on the energy $\eNrg(\od^\eps)$. This bound follows from the results obtained in \cite{Fat09a} for Onsager-Maier-Saupe energy and in \cite{Alic12} for the Ginzburg-Landau energy:
\begin{equation}\label{eq:red_nrg_decomp}
	\frac{\eNrg(\od^\eps)}{\eps^2\dr_\eq^2}\,\,\geq\,\pi N\ln\eps\,+\,N\nNrg_0\,+\,
	\tilde\nNrg(\vz{1},\ldots,\vz{N};\cphi)\,+\,\frac{1}{\eps^2\dr_\eq^2}\,\int_\dom\sNrg\big(\od^\eps\big\vert\hat\od[\vv\dr_*]\big)\dvz.
\end{equation}
Here $E_0$ is a fixed constant related to the optimal profile problem, the {\em renormalized multi-vortex energy} is given by
\begin{equation} \label{eq:tildeE}
	\tilde\nNrg(\vz{1},\ldots,\vz{N};\cphi)\,=\,\frac{1}{2}\int_\dom|\grd\cphi(\vv z)|^2\dvz\,+\,U(\vz{1},\ldots,\vz{N}),
\end{equation}
where the first term is the Dirichlet energy of the field $\cphi(\vv z)$, which satisfies the boundary condition,
\begin{equation}\label{eq:psiphi}
	\cphi(\vv z)\,=\,\cpsi(\vv z)-\,\mi\,\sum_{k=1}^Nd_k\arg(\vv z-\vz{k}),\quad\vv z\in\pd\dom;
\end{equation}
%.
and the second term is the {\em multi-vortex potential,}
\begin{align}\label{eq:th_nrg_vor_intro}
	U(\vz{1},\ldots,\vz{N})\,=\,&-\,\pi\sum_{j,\,k=1\atop j\neq k}^{N}d_kd_j\,\ln|\vz{k}-\vz{j}|\\\nonumber
	&+\,\sum_{k=1}^{N}d_k\oint_{\partial\dom} \ln|\vv z-\vz{k}|\md\cpsi (\vv z)  \,-\,
	\frac{1}{2}\sum_{j,\,k=1}^{N}d_kd_j\oint_{\partial\dom}\ln|\vv z-\vz{k}|\md\,\arg(\vv z-\vz{j}).
\end{align}
In this work we impose the Dirichlet boundary condition on the order parameter fields, $\vv\dr(\vv z)$, on $\pd\dom$, i.e., we prescribe the function $\cpsi(\vv z)$, cf. (\ref{eq:bc_dir}). Therefore the multi-vortex potential $U$ may be expressed explicitly as a function of vortex locations. This contrasts with a situation when the Neumann boundary condition is used. In that case, the multi-vortex potential also depends on the function $\cphi(\vv z)$. 
This may be seen if we rewrite the second term on the right-hand side of (\ref{eq:th_nrg_vor_intro}) in terms of $\cphi(\vv z)$ using the identity (\ref{eq:psiphi}) (as the function $\cpsi$ is not given). Therefore the multi-vortex potential, $U$, also depends on the boundary value of $\cphi(\vv z)$.

The energy decomposition (\ref{eq:red_nrg_decomp}) allow us to undertake an asymptotic reduction in the limit of small $\eps$. We see, that, as $\eps\to0$, the relative entropy term forces $\od(\varphi,\vv z)$ to remain close to $\odh[\vv\dr_*](\varphi,\vv z)$ at all times. Consequently, any gradient flow dynamics preserving the temperedness condition reduces to the motion of vortices and evolution of the field $\cphi(\vv z)$. 
%

%%%%%%%%%%%%%%%%%%%%%%%%%%%%%%%% DOI

%
%
\section{Dissipative Doi-Smoluchowski dynamics}

The generalized Doi-Smoluchowski kinetic equations \cite{DoiEdw} describe evolution of the density of liquid crystalline molecules $\odc(\vv s, \vv r; t)$ at a position $\vv r\in\R^3$ and orientation $\vv s \in \sph^2$. In general, the D-S dynamics includes the hydrodynamic interactions and diffusive transport of the spatial and orientational degrees of freedom. In this paper we consider the stage of evolution at which the hydrodynamic and diffusive transports of the spatial degrees of freedom have already equilibrated, and the evolution proceeds via diffusion of the orientational degrees of freedom. In this regime the concentration of liquid crystalline molecules is constant,
\begin{equation}
	c(\vv r;t)\ass\int_{\sph^2} \odc (\vv s, \vv r; t) \md \vv s\,=\,\const;
\end{equation}
and  the D-S equations may be rewritten in terms of the space-dependent probability density of molecules orientations, $\od(\vv s, \vv r; t)\ass\odc(\vv s, \vv r; t)/c(\vv r;t)$:
\begin{equation}\label{eq:doi2d1}
	\pd_t\od(\varphi,\vv z;t)\,=\,\pd_{\vv s}\cdot\bigg[\od\,\pd_{\vv s}\!\VD{\bf \eNrg}{\od}\bigg],
\end{equation}
where $\pd_{\vv s}$ and $\pd_{\vv s}\cdot$ denote the gradient and divergence operators on the sphere $\sph^2$; $\delta/\delta\od$ is the usual Euler-Lagrange variational derivative; and $\eNrg(\od)$ is the free energy of the system. In the two-dimensional model that we consider in this work, this equation becomes
\begin{equation}\label{eq:doi2d}
	\pd_t\od(\varphi,\vv z;t)\,=\,\pd_{\varphi}\bigg[\od\,\pd_{\varphi}\!\VD{\eNrg}{\od}\bigg],
\end{equation}
where $\varphi \in [0,2\pi)$, and $\eNrg (\od)$ is the spatially-extended Onsager-Maier-Saupe free energy \eqref{eq:fnrg2d}. Explicitly, equation (\ref{eq:doi2d}) may be written as
\begin{equation}\label{eq:dyn_doi_expl}
	\pd_t\od(\varphi,\vv z;t)\,=\,\pd^2_{\varphi\varphi}\od\,-\,
	2\im\big[\pd_\varphi(\od\me^{-2\mi\varphi})\opL\vv\dr\big],\qquad \opL\,=\,\eps^2\Delta+\gamma.
\end{equation}
Prescribing the boundary conditions on $\pd\dom$ for $\od$ directly is physically meaningless (there is no physical mechanism which would allow us to manipulate the density of orientations directly), and mathematically ill-posed. In this work we impose the Dirichlet boundary condition on the order parameter field: 
\begin{equation}
	\vv\dr(\vv z)\,=\,\dr_\eq\exp\big\{\mi\cpsi(\vv z)\big\},\qquad\vv z\in\pd\dom,
\end{equation}
where $\cpsi$ is defined on the boundary, possibly with a $2\pi k$-jump discontinuity somewhere on $\pd\dom$, as only values of $\cpsi(\vv z)\!\mod 2\pi$ are relevant.
Physically, this corresponds to the strong anchoring regime. Note, however, that Neumann, Robin, or mixed boundary conditions may be treated similarly, and result in different expressions for the renormalized multi-vortex energy.

We would like to study the dynamics prescribed by \eqref{eq:doi2d} in the limit when $\eps \ll 1$. This scaling corresponds to a regime when the defect cores shrink to a point and is motivated by our goal to understand the global evolution of patterns arising in the system, rather than the particular details of dynamics in vicinity of defect cores. Observe, that because of the energy decomposition \eqref{eq:red_nrg_decomp} in order to obtain a nontrivial dynamics, when the system does not immediately relax to the equilibrium state, we must consider \eqref{eq:doi2d} on a slower timescale $t'=\eps^2 t$ (dropping primes in what follows). The dynamics on this timescale is given by
\begin{equation}\label{eq:doi2d2}
	\pd_t\od(\varphi,\vv z;t)\,=\,\frac{1}{\eps^2} \pd_{\varphi}\bigg[\od\;\pd_{\varphi}\!\VD{\eNrg}{\od}\bigg].
\end{equation}

\paragraph{Summary of the results.}
In this work we show that, as $\eps\to 0$, the states of the system are close to multi-vortex configurations prescribed by (\ref{eq:th_lim_dir_gen}), and the dynamics \eqref{eq:doi2d2}  may be reformulated in terms of the dynamics of vortices $\vz{k}$ and the field $\cphi(\vv z)$. In particular, on the $\bO(1)$ timescale, the vortices are stationary, while the field $\cphi (z)$ evolves according to heat equation,
\begin{equation}\label{eq:phi_intro}
	{\pd_t}\cphi(\vv z;t)\,=\,\frac{4}{|\dom|\tau_\gamma}\Delta\cphi,\quad\vv z\in\dom,
\end{equation}
with the boundary condition
\begin{equation}\label{eq:bcond1}
	\cphi(\vv z;t)\,=\,\cpsi(\vv z)\,-\,\sum_{k=1}^N d_k\,\arg (\vv z - \vz{k}),\quad\vv z\in\pd\dom.
\end{equation}
Here $\tau_\gamma$ is related to parameters of the system via formula (\ref{eq:taugamma}). In order to obtain the motion of vortices, we must rescale the time yet again, introducing
\begin{equation}
	t^\prime\,=\,-\frac{8t}{\pi\tau_\gamma\ln\eps}.
\end{equation}
On this time scale, the function $\cphi(\vv z;t^\prime)$ is a harmonic function satisfying the same boundary condition (\ref{eq:bcond1}), while the vortices move according to the gradient flow equations generated by the renormalized multi-vortex energy given by \eqref{eq:th_nrg_vor_intro}:
\begin{align}\label{eq:z_intro}
	\vzd{k}(t^\prime)\,&=\,-\pd_{\vzb{k}}U\big(\vz{1},\ldots,\vz{N}\big).
\end{align}

We want to remark that in the natural regime of the Doi-Smoluchowski dynamics, which we consider in this paper, it is impossible to obtain a closed evolution equation for the order parameter field  $\vv\dr(\vv z)$, without immediately reducing it to the dynamics of vortices. This is a consequence of the fact that the reduction of the dynamics of $\od(\varphi,\vv z)$ to configurations defined by $\odn$ (which allows one to express all the relevant quantities in terms of $\vv\dr$) happens in the same limit as reduction of $\vv\dr(\vv z)$ to multi-vortex configurations prescribed by $\vv\dr_*$. This can be seen from equation (\ref{eq:nrg_decomp}), where the boundedness of the energy $\eNrg(\od)/\eps^2$ simultaneously imposes constraints on the relative entropy $\sNrg(\od|\odn)$ and the potential $W^\gamma(\dr)$, making the reduced free energy $\nNrg (\vv\dr)$ singular. Reduction to a theory involving exclusively the order parameter $\vv\dr$ would be possible if an additional large parameter appeared in front of the relative entropy term in  (\ref{eq:nrg_decomp}), without affecting the potential $W^\gamma$. However this is impossible due to the fact that both these terms appear as parts of the same entropic term in the Onsager energy \cite{Fat09a}. The situation is somewhat different in three dimensions, because in three-dimensional systems, the topological singularities do not possess infinite energy and thus the analogue of the reduced free energy $\nNrg(\vv\dr)$ remains nonsingular in such a limit. See \cite{E06} and \cite{Wang13}, where a reduction of Doi-Smoluchowski-type kinetic equations to Ericksen-Leslie equations was carried out.

In the following sections we will derive equations \eqref{eq:phi_intro} and \eqref{eq:z_intro} from the D-S evolution \eqref{eq:doi2d2} using ideas from the theory of gradient flows. To familiarize the reader with these ideas, we first discuss a finite dimensional example, in which we explain the methodology and derive equations for gradient flow dynamics constrained to a submanifold by a large drift generated  by the diverging part of the energy. Then we proceed to the analogous derivation for the D-S dynamics. In the final section of this paper, we rewrite equation \eqref{eq:doi2d2} in terms of an infinite hierarchy of evolution equations for moments of the orientation density and discuss possibilities for various closures of this hierarchy.

\section{A finite-dimensional example}
Suppose we are solving a (gradient flow) differential equation for $\vv x(t)\in\R^n$,
%
%
%\begin{equation}\label{eq:grflow_ODE_ij}
%	{\dot x}^i(t)\,=\,-D^{ij}(\vv x)\,{\pd_j}E(\vv x),
%\end{equation}
%
\begin{equation}\label{eq:grflow_ODE}
	\mybox{-0.5em}{1.5em}{\dot{\vv x}(t)\,=\,-D(\vv x)\,{\pd_{\vv x}}E(\vv x),}
	% \text{\quad or equivalently,\quad}G_{ij}(\vv x)\,\dot{x}^j(t)\,=\,\pd_iE(\vv x),
\end{equation}
where $E:\R^n\to\R$ is the {\em energy} function; $D(\vv x)$ is a symmetric positive semi-definite $n\times n$ matrix. Assume that all the quantities that we employ are sufficiently regular to guarantee well-posedness of our formal derivations. Our first goal is to show that the vector problem (\ref{eq:grflow_ODE}) is equivalent to a single scalar inequality, which allows us to interpret solutions of (\ref{eq:grflow_ODE}) as {\em curves of maximal slope} for the energy function, $E$. Next, we will show, that if the energy depends on a small parameter, $\epsilon$, in such a way that, as $\epsilon\to0$, solutions become constrained to a submanifold of $\R^n$, we can describe the limiting curves as curves of maximal slope in some native parameterization of this manifold.
\paragraph{Formulation via curves of maximal slope.}
Consider an arbitrary curve $\vv x(t)$, $t\in[0,T]$, which is such that $\dot{\vv x}\in\rng D(\vv x)$ for all $t$. For variation of $E(\vv x)$ along this curve, we have
\begin{equation}
	E(t)-E(0)\,=\,\int_0^t{\pd_{\vv x}}E\big(\vv x(s)\big)\cdot{\dot{\vv x}}(s)\md s.
\end{equation}
Here we allow for a slight abuse of notation, employing $E(t)$ instead of $E\big(\vv x(t)\big)$. Let $G(\vv x)$ be the {\em generalized inverse} of $D(\vv x)$, in the sense that $G$ is a symmetric matrix with the same range and kernel as $D$, inverting $D$ in its range (for each $\vv x$). Such generalized inverse is defined uniquelly. Observe that both $D$ and $G$ are positive semi-definite, they commute, and posses unique symmetric square roots. Observe also that  $DG$ acts as identity on $\dot{\vv x}$. Thus we have, 
\begin{equation}
	E(0)-E(t)\,=\,-\int_0^t{\pd_{\vv x}}E\cdot\Big[\sqrt{DG}\,\Big]\,{\dot{\vv x}}(s)\md s\,=\,
	\int_0^t\Big[-\sqrt{D}\,{\pd_{\vv x}}E\Big]\cdot\Big[\sqrt{G}\,{\dot{\vv x}}(s)\Big]\md s.
\end{equation}
%4
Using elementary inequalities, we obtain (omitting $s$-dependence),
\begin{equation}\label{eq:app_grd_flow_ineq1}
	E(0)-E(t)\stackrel{^{(\mathrm{a})}}{\leq}\,\int_0^t\big|\sqrt{D}\,{\pd_{\vv x}}E\big|\,\big|\sqrt{G}\,{\dot{\vv x}}\big|\md s\stackrel{^{(\mathrm{b})}}{\leq}\,
	\frac{1}{2}\int_0^t\left(\big|\sqrt{D}\,{\pd_{\vv x}}E\big|^2\,+\,\big|\sqrt{G}\,{\dot{\vv x}}\big|^2\right)\md s.
\end{equation}
Equality in (a) holds, if and only if $-\sqrt{D}\,{\pd_{\vv x}}E$ and $\sqrt{G}\,{\dot{\vv x}}$ are collinear. Equality in (b) holds, if and only if these quantities are equal by absolute values. Therefore, equalities in (\ref{eq:app_grd_flow_ineq1}) are attained, if and only if 
\begin{equation}
	\sqrt{G}\,{\dot{\vv x}}\,=\,-\sqrt{D}\,{\pd_{\vv x}}E(\vv x).
\end{equation}
Multiplying both sides by $\sqrt{D}$, we recover (\ref{eq:grflow_ODE}). Thus, by reversing the inequalities in (\ref{eq:app_grd_flow_ineq1}), we obtain an inequality which is equivalent to the differential equation (\ref{eq:grflow_ODE}):
\begin{equation}\label{eq:app_grd_flow_ineq2}
	\mybox{-0.5em}{1.5em}{
	E(0)-E(t)\,\geq\,\frac{1}{2}\int_0^t\left(\big\|{\pd_{\vv x}}E\big(\vv x(s)\big)\big\|_D^2\,+\,\big\|{\dot{\vv x}(s)}\big\|_G^2\right)\md s.}
\end{equation}
Here we denoted (explicitly writing out the derivatives and employing Einstein summation rules)
\begin{equation}\label{eq:curvemaxslope}
	\big\|{\pd_{\vv x}}E(\vv x)\big\|_D^2\,\ass\,D^{ij}(\vv x)\,{\pd_j}E(\vv x)\,{\pd_i}E(\vv x);\qquad
	\big\|{\dot{\vv x}}\big\|_G^2\,\ass\,G_{ij}(\vv x)\,\dot x^i\dot x^j.
\end{equation}
We say that a curve $\vv x(t)$ is a {\em curve of maximal slope} for the energy function $E(\vv x)$ in metric prescribed by $G(\vv x)$, if $\dot{\vv x}\in\rng G(\vv x)$, and the inequality (\ref{eq:app_grd_flow_ineq2}) holds for (almost) all $t$. 

In this derivation we started from a differential equation and obtained a scalar inequality, i.e., we showed that solutions of (\ref{eq:grflow_ODE}) are curves of maximal slope for $E(\vv x)$, and vice versa. One could, however, start with (\ref{eq:app_grd_flow_ineq2}) in a rather general metric space setting, and prove that the curves of maximal slope exist, posses certain regularity properties, and satisfy some differential equations, whenever the energy and the metric are sufficiently regular themselves. Such developments may be found in the book by Ambrosio, Gigli, and Savar\'e \cite{Ambro05}. Whenever we discuss gradient flow equations in this work, we understand them in terms of the curves of maximal slope formulation.

\paragraph{Change of variables.}
Suppose we want to study a family of curves of maximal slope which lie in an $m$-dimensional submanifold, $\mathcal M$, of $\R^n$ in a parameterization native to $\mathcal M$. In other words, we assume that $\vv x(t)=\vv\chi(\vv y(t))$ for $\vv y\in\R^m$, $m\leq n$, and some map $\vv{\chi}:\R^m\to\R^n$; and we want to obtain description of our curves using the $\vv y$-variables. We will always work within the same chart of $\mathcal M$, and will  not worry about chart transitions here.

Using the chain rule, we immediately get, (employing Greek indices for the $\vv y$-variables)
\begin{equation}\label{eq:chvar_ineq1}
	\big\|{\dot{\vv x}}\big\|_G^2\,=\,G_{ij}(\vv x)\,\dot x^i\dot x^j\,=\,G_{ij}\big(\vv\chi(\vv y)\big)\,\Big[{\pd_\alpha}\chi^i(\vv y)\dot y^\alpha\Big]\Big[{\pd_\beta}\chi^j(\vv y)\dot y^\beta\Big]\ass\,
	\tilde G_{\alpha\beta}(\vv y)\,\dot y^\alpha\dot y^\beta\,=\,\big\|{\dot{\vv y}}\big\|_{\tilde G}^2,
\end{equation}
where the $m\times m$ matrix $\tilde G(\vv y)$ is defined as
\begin{equation}\label{eq:grd_red1}
	{\tilde G}_{\alpha\beta}(\vv y)\,=\,{\pd_\alpha}\chi^i(\vv y)\,G_{ij}\big(\vv\chi(\vv y)\big)\,{\pd_{\beta}}\chi^j({\vv y}).
\end{equation}
The matrix $\tilde G(\vv y)$ has a simple geometric interpretation: it is the metric induced by $G$ on $\mathcal M$, expressed in the $\vv y$-parameterization.  Define $\tilde D(\vv y)$ as the generalized inverse of $\tilde G(\vv y)$; and set
\begin{equation}\label{eq:grd_red}
	\tilde E(\vv y)\,\ass\,E\big(\vv\chi(\vv y)\big).
\end{equation}
Let us make a few additional assumptions, which are not required, but simplify some of the following arguments. Suppose there exists a neighborhood of $\mathcal M$, in which there exists a non-degenerate map $\vv\eta:\R^n\to\R^m$, such that 
\begin{equation}
	\vv y\,=\,\vv\eta\big(\vv\chi(\vv y)\big).
\end{equation}
Assume also that the exists another non-degenerate map $\vv\zeta:\R^n\to\R^{n-m}$, such that $\mathcal M$ is the 0-level set of $\vv\zeta$, and ${\pd_{\vv x}}\vv\zeta$ is orthogonal to ${\pd_{\vv x}}\vv\eta$, i.e.,
\begin{equation}
	{\pd_i}\zeta^\nu(\vv x)\, D^{ij}(\vv x)\, {\pd_j}\eta^\alpha(\vv x)\,=\,0;\qquad\alpha=1\ldots m;\quad\nu=1\ldots n-m.
\end{equation}
Decomposing the $\vv x$-gradient of $E$ into the sum of gradients with respect to $\vv \eta$ and $\vv\zeta$, we get
\begin{align}\label{eq:grad_norm}
	\big\|{\pd_{\vv x}}E\big\|^2_D\,&=\,\big\|{\pd_{\vv\eta}}E\,{\pd_{\vv x}}\vv\eta\,+\,{\pd_{\vv\zeta}}E\,{\pd_{\vv x}}\vv\zeta\big\|^2_D\,=\,
	\big\|{\pd_{\vv\eta}}E\,{\pd_{\vv x}}\vv\eta\big\|^2_D\,+\,{\pd_\nu}E\Big[{\pd_i}\zeta^\nu\, D^{ij}\, {\pd_j}\eta^\alpha\Big]{\pd_\alpha}E\,+\,
	\big\|{\pd_{\vv\zeta}}E\,{\pd_{\vv x}}\vv\zeta\big\|^2_D\nonumber\\
	&=\,\big\|{\pd_{\vv\eta}}E\,{\pd_{\vv x}}\vv\eta\big\|^2_D\,+\,\big\|{\pd_{\vv\zeta}}E\,{\pd_{\vv x}}\vv\zeta\big\|^2_D\,\geq\,
	\big\|{\pd_{\vv\eta}}E\,{\pd_{\vv x}}\vv\eta\big\|^2_D.
\end{align}
When $\vv x=\vv\eta(\vv y)\in\mathcal M$,  the last term in (\ref{eq:grad_norm}) is exactly $\big\|{\pd_{\vv y}}\tilde E\big\|^2_{\tilde D}$. Thus we get
\begin{equation}\label{eq:chvar_ineq2}
	\big\|{\pd_{\vv x}}E\big(\vv\chi(\vv y)\big)\big\|_D^2\,\geq\,\big\|{\pd_{\vv y}}\tilde E(\vv y)\big\|_{\tilde D}^2.
	%\,=\,\tilde D^{\alpha\beta}(\vv y)\,{\pd_\alpha}\tilde E(\vv y)\,{\pd_\beta}\tilde E(\vv y)
\end{equation}
This inequality expresses the fact that by extending $\tilde E$ as $E$ from $\mathcal M$ into $\R^n$, we can only increase the norm of its gradient. Combining equations (\ref{eq:chvar_ineq1}), (\ref{eq:chvar_ineq2}), and (\ref{eq:app_grd_flow_ineq2}), we get
\begin{equation}\label{eq:app_grd_flow_chvar}
	\tilde E(0)-\tilde E(t)\,\geq\,\frac{1}{2}\int_0^t\left(\big\|{\pd_{\vv y}}\tilde E\big(\vv y(s)\big)\big\|_{\tilde D}^2\,+\,\big\|{\dot{\vv y}(s)}\big\|_{\tilde G}^2\right)\md s.
\end{equation}
Thus we demonstrated that curves of maximal slope in $\vv x$-parameterization of $\R^n$ are also curves of maximal slope in $\vv y$-parameterization of $\R^m$, when the latter is equipped with metric inherited through its embedding as the submanifold of $\R^n$.

\paragraph{Asymptotic reduction.} Suppose now, our energy depends on a small parameter, $\epsilon$, and the dynamics is such that, as $\epsilon\to0$, all the trajectories $\vv x^\epsilon(t)$ become constrained onto $\mathcal M$. Let us derive equations which describe this asymptotic dynamics in terms of the $\vv y$-variables. 

Consider the energy function of the following form:
\begin{equation}
	E^\epsilon(\vv x)\,=\,U(\vv x)\,+\,\frac{1}{\epsilon}V(\vv \zeta(\vv x)),
\end{equation}
where $\vv\zeta(\vv x)$ is as above.  Assume that $U:\R^n\to\R$ is bounded below; $V:\R^{n-m}\to\R^+$ has minimum at the origin and has no other critical points; without loss of generality, set $V(0)=0$. This construction is designed so, that the ``fast'' flow generated by $V$ will quickly carry solutions to the vicinity of $\mathcal M$, while it will not affect the dynamics on $\mathcal M$ itself.

Pick a sequence of initial conditions, such that
\begin{subequations}
\begin{align}
	\vv x^\epsilon(0)&\to\vv x^0(0)\in\mathcal M;\\\label{eq:conv_nrg}
	E^\epsilon\big(\vv x^\epsilon(0)\big)\,&\to\,U\big(\vv x^0(0)\big).
\end{align}
\end{subequations}
The second condition assures that there is no excess energy in the system. Generally, one can show (by other methods) that this condition is not required, as it will be automatically satisfied after some initial time of $\scriptstyle{\mathcal O}\displaystyle(1)$. Assume that, as $\epsilon\to0$,
\begin{equation}
	\vv x^\epsilon(t)\to\vv x^0(t),\quad\text{pointwise for}~t\in[0,T].\\
\end{equation}
This may be proven for sufficiently regular $U$ and $V$. The energy is non-increasing along the curves of maximal slope, therefore $V\big(\vv x^\epsilon(t)\big)$ must remain of $\bO(\epsilon)$ for all $t>0$. This implies that $\smash{\vv x^0(t)\in\mathcal M}$ for all $t\geq0$.

We will now show that $\vv x^0(t)$ is a curve of maximal slope for $U$ on $\mathcal M$ equipped with metric inherited from $\R^n$. First of all, observe, that due to positivity of $V$,
\begin{equation}
	E^\epsilon\big(\vv x^\epsilon(0)\big)-E^\epsilon\big(\vv x^\epsilon(t)\big)\,\leq\,E^\epsilon\big(\vv x^\epsilon(0)\big)-U\big(\vv x^\epsilon(t)\big).
\end{equation}
Passing to the limit as $\epsilon\to0$, using (\ref{eq:conv_nrg}) and the continuity of $U$, we get
\begin{equation}\label{eq:asred3}
	\lim_{\epsilon\to0}\Big[E^\epsilon\big(\vv x^\epsilon(0)\big)-E^\epsilon\big(\vv x^\epsilon(t)\big)\Big]\,\leq\,U\big(\vv x^0(0)\big)-U\big(\vv x^0(t)\big)\,=\,
	\tilde U\big(\vv y(0)\big)-\tilde U\big(\vv y(t)\big).
\end{equation}
The pointwise convergence of $\vv x^\epsilon(t)$ to $\vv x^0(t)$ implies
\begin{equation}\label{eq:asred2}
	\liminf_{\epsilon\to0}\int_0^t\big\|\dot{\vv x}^\epsilon(s)\big\|_{G(\vv x^\epsilon)}^2\md s\geq
	\int_0^t\big\|\dot{\vv x}^0(s)\big\|_{G(\vv x^0)}^2\md s\,=\,\int_0^t\big\|\dot{\vv y}(s)\big\|_{\tilde G(\vv y)}^2\md s.
\end{equation}
From (\ref{eq:grad_norm}), using that $V$ only depends on $\vv \zeta$, we get that
\begin{align}
	\big\|{\pd_{\vv x}}E^\epsilon(\vv x^\epsilon)\big\|^2_{D(\vv x^\epsilon)}\,\geq\,
	\big\|{\pd_{\vv\eta}}E^\epsilon(\vv x^\epsilon)\,{\pd_{\vv x}}\vv\eta(\vv x^\epsilon)\big\|^2_{D(\vv x^\epsilon)}\,=\,\big\|{\pd_{\vv\eta}}U(\vv x^\epsilon)\,{\pd_{\vv x}}\vv\eta(\vv x^\epsilon)\big\|^2_{D(\vv x^\epsilon)}.
\end{align}
Therefore
\begin{equation}\label{eq:asred1}
	\liminf_{\epsilon\to0}\int_0^t\big\|{\pd_{\vv x}}E^\epsilon\big(\vv x^\epsilon\big)\big\|^2_{D(\vv x^\epsilon)}\md s\,\geq\,
	\int_0^t\big\|{\pd_{\vv\eta}}U\big(\vv x^0\big)\,{\pd_{\vv x}}\vv\eta\big(\vv x^0\big)\big\|^2_{D(\vv x^0)}\md s\,=\,
	\int_0^t\big\|{\pd_{\vv y}}\tilde U(\vv y)\big\|^2_{\tilde D(\vv y)}\md s.
\end{equation}
Using inequalities (\ref{eq:asred3}), (\ref{eq:asred2}), and (\ref{eq:asred1}) in (\ref{eq:app_grd_flow_ineq2}), we get the desired result:
\begin{equation}\label{eq:app_grd_flow_asy}
	\tilde U(0)-\tilde U(t)\,\geq\,\frac{1}{2}\int_0^t\left(\big\|{\pd_{\vv y}}\tilde U\big(\vv y(s)\big)\big\|_{\tilde D}^2\,+\,\big\|{\dot{\vv y}(s)}\big\|_{\tilde G}^2\right)\md s.
\end{equation}
Thus we see that the limiting trajectories are curves of maximal slope for the ``slow'' part of the energy, $U$, constrained to $M$. Using the equivalence of this formulation to formulation via gradient differential equations, we can also state this result in the following manner: the limiting trajectories may be obtained as $\vv x^0(t)\,=\,\vv \chi(\vv y(t))$, where $\vv y(0)=\vv\eta(\vv x(0))$ and $\vv y(t)$ satisfies
\begin{equation}
	\dot{\vv y}(t)\,=\,-\tilde D(\vv y)\,{\pd_{\vv y}}\tilde U(\vv y).
\end{equation}
Note that the matrix $\tilde D(\vv y)$ must be computed by inverting the matrix of the metric tensor $\tilde G(\vv y)$ given by (\ref{eq:grd_red1}). Calculation of these matrices becomes the only ingredient required for obtaining the limiting dynamics.

%\scalebox{0.75}{\includegraphics{slowman}}

\section{Derivation for the Doi-Smoluchowski dynamics}\label{sec:Doi}
The kinetic equation (\ref{eq:doi2d}) formally resembles our finite-dimensional ODE example. The density of orientations $\od(\varphi,\vv z)$ plays the role of $\vv x$-variables, while the vortex locations $\vv z_k$ and the function $\cphi(\vv z)$ correspond to the reduced $\vv y$-variables. Thus we will proceed along the same lines in this derivation.

 \paragraph{Mobility and metric.} As in the finite-dimensional example, we can obtain that the dynamics (\ref{eq:doi2d}) is equivalent to the following inequality:
\begin{equation}
	\eNrg (0) - \eNrg(t) \,\geq \,\int_0^t\, \left(  \left\|\VD{\eNrg}{\od}\right\|^2_{\hat D}  \,+ \, \big\| \pd_t \od \big\|^2_{\hat G} \right) \, \md t ,
\end{equation}
where the operator $\hat D[\od]f(\varphi)\ass-{\pd_\varphi}\big(\od\,{\pd_\varphi}f(\varphi)\big)$ corresponds to  the matrix $D(\vv x)$ in \eqref{eq:grflow_ODE}, and $\smash{\hat G = \hat D^{-1}}$ is the generalized inverse of $\hat D$. In order to determine $\hat G$, we must solve the differential equation,
\begin{equation}\label{eq:Geq1}
	-{\pd_\varphi}\big(\od(\varphi)\,{\pd_\varphi}f(\varphi)\big)\,=\,u(\varphi).
\end{equation}
Let us assume that the support of $\od(\varphi)$ is the entire interval, $[0,2\pi)$, and treat $\smash{\hat D}$ and $\smash{\hat G}$ as symmetric operators defined on smooth $2\pi$-periodic functions. Integrating (\ref{eq:Geq1}) once, we get
\begin{equation}\label{eq:transport1}
	v(\varphi)\,\ass\,{\pd_\varphi} f(\varphi)\,=\,-\frac{1}{\od(\varphi)}\int_0^{\varphi} u(\varphi^\prime)\md\varphi^\prime\,+\,\frac{C}{\od(\varphi)}.
\end{equation}
%
%After the second integration, we obtain
%%
%\begin{equation}
%	f(\varphi)\,=\,-\int_0^\varphi\frac{1}{\od(\varphi^\prime)}\int_0^{\varphi^\prime} u(\varphi^\pprime)\md\varphi^\pprime\md\varphi^\prime\,+\,
%	C_1\int_0^\varphi\frac{1}{\od(\varphi^\prime)}\md\varphi^\prime\,+\,C_2
%\end{equation}
%%
%The periodicity condition, $f(0)=f(2\pi)$, requires that
%%
%
The function $v(\varphi)$ is a derivative, and thus its total integral must vanish; this gives us the condition,
\begin{equation}\label{eq:transport2}
	C\,=\,\bigg[\int_0^{2\pi}\frac{\md\varphi}{\od(\varphi)}\bigg]^{-1}\int_0^{2\pi}\frac{1}{\od(\varphi)}\int_0^{\varphi} u(\varphi^\prime)\md\varphi^\prime\md\varphi.
\end{equation}
We do not need to integrate equation (\ref{eq:Geq1}) second time, as it is convenient to define $\hat G$ using $v(\varphi)$. We only need $\hat G$ as a bilinear form; for its action we have, whenever $u,\tilde u,f,\tilde f\in\rng\hat G$,
\begin{equation}
	(u,\hat G\tilde u)\,=\,(\hat D f,\hat G\hat D\tilde f)\,=\,(\hat D f, \tilde f)\,=\,(-\pd_\varphi \od\pd_\varphi f,\tilde f)\,=\,(\od\,{\pd_\varphi}f,{\pd_\varphi}\tilde f).
\end{equation}
Writing this down explicitly in terms of $v(\varphi)$ and $\tilde v(\varphi)$,
\begin{equation}\label{eq:w2prod}
	(u,\hat G\tilde u)\,=\,\big(u,\big[-\pd_\varphi\od\pd_\varphi\big]^{-1}\tilde u\big)\,\ass\,
	\int_\dom\int_0^{2\pi}v(\varphi,\vv z)\,\tilde v(\varphi,\vv z)\;\od(\varphi,\vv z)\md\varphi\dvz.
\end{equation}

\paragraph{Structure of the slow manifold.}
The manifold $\mathcal M$ corresponds to the set of optimal orientation densities produced by multi-vortex maps. Let us use the hat symbol, ``$\hat{~}$'' to denote such configurations; we have, as in (\ref{eq:fam2d}),
\begin{equation}\label{eq:fam2dlast}
	\od(\varphi,\vv z)\,=\,\odnh(\varphi,\vv z)\,=\,\Big[2\pi\ibes_0\!\big(\!\abes(\hat\dr)\big)\Big]^{-1}{\exp\Big[\abes(\hat\dr)\cos(2\varphi-\arg\hat{\vv\dr})\Big]},
\end{equation}
where the order parameter field is parameterized by vortex locations, $\vz{k}$, $k=1,\ldots,N$, and the phase function $\cphi(\vv z)$:
\begin{equation}
	\quad \hat{\vv\dr}[\vz{1},\ldots,\vz{k};\cphi](\vv z)\,=\,\hat\dr[\vz{1},\dots,\vz{N}](\vv z)\;\exp\Big[\mi\cphi(\vv z)\,+\,\mi\sum_k \;d_k\arg(\vv z-\vz{k})\Big].
\end{equation}
Even though we do not know the exact shape of the function $\hat\dr[\vz{1},\dots,\vz{N}](\vv z)$, the finiteness of the energy $\nNrg(\vv\dr)$ implies that $\hat\dr(\vv z)$  turns to zero at the location of vortices and approaches $\dr_\eq$ at distances larger than $\bO(\eps)$. However, the specifics of this behavior are not important for our purposes. It is sufficient to utilize the following property, which is well-known in the context of the Ginzburg-Landau theory \cite{San98}. Given a tempered family of order parameter fields converging to a multi-vortex configuration, there exists a covering of the vortices by disks $\ball{\vz{k}}{R_\epsilon}$ with radii $R_\eps=\bO(\eps)$, such that $|\dr^\epsilon(\vv z)-\dr_\eq|=\sO(1)$ in the exterior of these disks, while
\begin{equation}\label{eq:covering}
	\int_{\ball{\vz{k}}{R_\epsilon}} |\nabla\vv\dr^\epsilon(\vv z)|^2\dvz \,\leq \,C.
\end{equation}
This bound will be used below to estimate the matrix elements of $\hat G$ restricted to multi-vortex configurations.
\paragraph{Change of variables.}
In our liquid crystalline system, the role of the map $\vv\chi$ is played by $\odh$; variables parameterizing the slow manifold, $\mathcal M$, are the vortex locations, $\vz{k}$, and the function $\cphi(\vv z)$. Let us define $\vv Y\ass(\vz{1}, \ldots, \vz{N}, \Phi)$. Let the index $\alpha$ run through these parameters. Using expression \eqref{eq:red_nrg_decomp} for the decomposition of energy $\eNrg (\od)$, and proceeding formally in the same way as in the  finite dimensional example, we obtain,
\begin{equation}
	\tilde \nNrg (0)\, -\, \tilde \nNrg(t) \,\geq\, \int_0^t \left(  \left\| \VD{\tilde \nNrg}{\vv Y}\right \|^2_{D} + \big\| {\pd_t}\vv Y \big\|^2_{G} \right) \, \md t.
\end{equation}
This inequality is equivalent to the following differential equations for the reduced dynamics:
\begin{equation}
	{\pd_t}\vv Y = - D\,\VD{\tilde \nNrg}{\vv Y}.
\end{equation}
Thus we need to compute the matrix $G$ and its generalized inverse $D = G^{-1}$. We start by computing the analogues of the derivatives $\pd_{\alpha}\chi^{i}(\vv y)$.  The chain rule gives us,
\begin{equation}
	{\pd_\alpha}\odh\,=\,\Big[{\pd_{\dr}}\odh\Big]\,{\pd_\alpha}\hat\dr\,+\,\Big[{\pd_{\arg\vv\dr}}\odh\Big]\,{\pd_\alpha}\arg\hat{\vv\dr}.
\end{equation}
Observing that ${\pd_{\arg\vv\dr}\odh}\,=\,-\pd_\varphi\odh/2$, we can write
\begin{subequations}
\begin{align}
	{\pd_{\cphi}}\odh\,&=\,\mybox{0em}{0em}{-\frac{1}{2}\pd_\varphi\odh;}^{~\text{(a)}}\\\label{eq:Gcrossterms}
	{\pd_{\vz{k}}}\odh\,&=\,\mybox{0em}{0em}{-\frac{\mi d_k}{4(\vv z-\vz{k})}\pd_\varphi\odh}^{~\text{(b)}}\!+~\mybox{-0.95em}{2.4em}{{\pd_\dr}\odh\,{\pd_{\vz{k}}}\hat\dr.}^{~\text{(c)}}
\end{align}
\end{subequations}
In order to compute the matrix
\begin{equation}\label{eq:finalG}
	G_{\alpha\beta}\,\ass\,({\pd_\alpha}\odh,\hat G\,{\pd_\beta}\odh)\,=\,\int_\dom\int_0^{2\pi}v_\alpha(\varphi,\vv z)\,v_\beta(\varphi,\vv z)\;\od(\varphi,\vv z)\md\varphi\dvz,
	%\,=\,\int_\dom M_\alpha(\vv z)M_\beta(\vv z)\int_0^{2\pi}v^2(\varphi,\vv z)\odh(\varphi,\vv z)\md\varphi\;\dvz,
\end{equation}
we must find the fields $v_\alpha(\varphi,\vv z)$ by solving the differential equations, 
\begin{equation}\label{eq:tr_chvar}
	-\pd_\varphi(\odh\,v_\alpha)\,=\,\pd_\alpha\odh,
\end{equation}
as described in formulas (\ref{eq:transport1}) and (\ref{eq:transport2}), and after that, evaluate the integrals in (\ref{eq:finalG}). Let us implement this plan for $G_{\vz{k}\vzb{k}}$. As equation (\ref{eq:tr_chvar}) is linear, its solution may be represented as a sum of solutions with right-hand sides corresponding to terms labelled as (b) and (c) in formula (\ref{eq:Gcrossterms}).  For the solution corresponding to term (b), omitting the $\vv z$-dependence, we get
\begin{equation}
	v(\varphi)\,=\,\frac{2\pi}{\odh(\varphi)}\bigg[\int_0^{2\pi}\frac{\md\varphi^\prime}{\odh(\varphi^\prime)}\bigg]^{-1}-\,1\,=\,
	\frac{1}{2\pi\Ibes_0^2(\Abes\!\big(\hat\dr)\big)\odh(\varphi)}\,-\,1.
\end{equation}
For term (c), we first compute
\begin{equation}\label{eq:rhs2}
	\Pd{\odh}{\dr}\,=\,\Abes^\prime(\dr)\big[\cos(2\varphi-\arg\vv\dr)\,-\,\dr\big]\odh(\varphi);
\end{equation}
now we can see that the solution to (\ref{eq:tr_chvar}) with right-hand side given by (\ref{eq:rhs2}) may be represented as $\Abes^\prime(\dr)F(\varphi,\arg\vv\dr)$, where $F$ is some bounded function. Thus we have
\begin{equation}\label{eq:vzk_exact}
	v_{\vz{k}}(\varphi)\,=\,-\frac{\mi d_k}{4(\vv z-\vz{k})}\bigg[\frac{1}{2\pi\Ibes_0^2(\Abes\!\big(\hat\dr)\big)\odh(\varphi)}\,-\,1\bigg]\,+\,\Abes^\prime(\dr)F(\varphi,\arg\vv\dr)\,{\pd_{\vz{k}}}\hat\dr.
\end{equation}
For $v_{\vzb{k}}$ we get the expression, complex-conjugate to (\ref{eq:vzk_exact}). Now we compute the integrals in (\ref{eq:finalG}). As, the only non-integrable singularity which appears in the calculation is $\smash{1/|\vv z|^2}$, all the terms, except the product of the first term in (\ref{eq:vzk_exact}) and its complex conjugate, contribute in $\bO(1)$ as $\eps\to0$. Integrating over $\varphi$, we obtain
\begin{equation}\label{eq:int1}
	G_{\vz{k}\vzb{k}}\,=\,\frac{1}{16}\int_\dom\bigg[1\,-\,\frac{1}{\Ibes^2_0\!\big(\!\Abes(\hat\dr(\vv z))\big)}\bigg]\frac{\dvz}{|\vv z-\vz{k}|^2}\,+\,\bO(1).
\end{equation}
In order to estimate this integral, we first split the domain $\dom$ in two parts: $\ball{R_\epsilon}{z_k}$ and $\dom \setminus \ball{R_\epsilon}{z_k}$, where the radius $R_\epsilon=\bO(\epsilon)$ is chosen so that the inequality (\ref{eq:covering}) holds.  Using the properties of the function $\Abes(\dr)$ (see formula (\ref{eq:app_ineqrad}) in the appendix), and taking into account (\ref{eq:covering}), we obtain
\begin{equation}
	G_{\vz{k}\vzb{k}} \leq C_1\int_{\ball{R_\epsilon}{z_k}} |\nabla \vv\dr|^2\dvz \leq C_2.
\end{equation}
For the integral over the exterior of the disk $\ball{R_\epsilon}{z_k}$ we can use Lemma~\ref{lemma1} from the appendix, obtaining in the end,
\begin{equation}
	G_{\vz{k}\vzb{k}}\,=\,-\frac{\pi}{8}\left[1\,-\,\frac{1}{\Ibes^2_0\!\big(\!\Abes(\dr_\eq)\big)}\right]\ln\epsilon\,+\,\bO(1).
\end{equation}
The calculation of $G_{\cphi\cphi}$ is equivalent to this one. We do not provide detailed calculations for other matrix elements of $G$ here; it is only important to note that they all are of $\bO(1)$ as $\eps\to0$. Summarizing all these calculations we have,
\begin{subequations}
\begin{align}
	G_{\vz{k}\vzb{k}}\,&=\,
	-\frac{\pi\tau_\gamma}{8}\ln\eps\,+\,\bO(1),\quad k=1,\ldots,N;
	\\
	G_{\cphi\cphi}\,&=\,\frac{|\dom|\tau_\gamma}{4}\,+\,\bO(\eps);
	\\
	G_{\alpha\beta}\,&=\,\bO(1),\quad\text{for all other combinations of~}\alpha\text{~and~}\beta.\rule{0em}{1.5em}
\end{align}
\end{subequations}
Here we introduced a scaling factor,
\begin{equation}\label{eq:taugamma}
	\tau_\gamma\,\ass\,1\,-\,\frac{1}{\Ibes^2_0\!\big(\!\Abes(\dr_\eq)\big)}\,=\,1\,-\,\frac{1}{\Ibes^2_0(\gamma\dr_\eq)};
\end{equation}
for the second equality we used that $\dr_\eq$ satisfies $\gamma\dr_\eq\,=\,\Abes(\dr_\eq)$.

The final step in our calculation is computation of $D$, the inverse of $G$. Using Lemma~\ref{lemma} from the appendix (setting $\delta=-1/\ln\eps$), we obtain
\begin{subequations}
\begin{align}
	D_{\vz{k}\vzb{k}}\,&=\,
	-\frac{8}{\pi\tau_\gamma\ln\eps}\,+\,\bO(1/\ln^2\eps),\quad
	D_{\vz{k}\vz{k}}\,=\,\bO(1/\ln^2\eps),
	\quad k=1,\ldots,N;
	\\
	D_{\cphi\cphi}\,&=\,\frac{4}{|\dom|\tau_\gamma}\,+\,\bO(1/\ln\eps);
	\\
	D_{\alpha\beta}\,&=\,\bO(1/\ln\eps),\quad\text{for all other combinations of~}\alpha\text{~and~}\beta.\rule{0em}{1.25em}
\end{align}
\end{subequations}

\paragraph{Evolution equations for $\vz{k}$, $\cphi$, and timescales.}
Now, once we have computed the matrix $D$, we are finally able to write down equations governing the evolution of vortices, $\vz{k}$, and the function $\cphi(\vv z)$ in the limit as $\eps\to0$:
\begin{subequations}
\begin{align}\label{eq:vort_last}
	\vzd{k}\,&=\,\frac{8}{\pi\tau_\gamma\ln\eps}\pd_{\vzb{k}}\tilde E(\vz{1},\ldots,\vz{N};\cphi)\,+\,\bO(1/\ln\eps)\VD{\tilde E}{\cphi}\,+\,\bO(1/\ln^2\eps);\\
	{\pd_t}\cphi\,&=\,-\frac{4}{|\dom|\tau_\gamma}\VD{\tilde E}{\cphi}\,+\,\bO(1/\ln\eps)\,=\,\frac{4}{|\dom|\tau_\gamma}\,\Delta\cphi\,+\,\bO(1/\ln\eps).
\end{align}
\end{subequations}
From these equations we can see that, as $\eps\to0$,  the vortices are stationary, and the only evolution which occurs in our system is the relaxation of the field $\cphi(\vv z)$ governed by the heat equation,
\begin{equation}
	{\pd_t}\cphi(\vv z;t)\,=\,\frac{4}{|\dom|\tau_\gamma}\Delta\cphi,\quad\vv z\in\dom.
\end{equation}
The boundary condition is inherited from our Dirichlet boundary condition on $\vv\dr$:
\begin{equation}\label{eq:bcond}
	\cphi(\vv z;t)\,=\,\cpsi(\vv z)\,-\,\sum_{k=1}^N d_k\,\arg (\vv z - \vz{k}),\quad\vv z\in\pd\dom.
\end{equation}
In order to capture the vortex dynamics, we introduce a slow, rescaled time,
\begin{equation}
	t^\prime\,=\,-\frac{8t}{\pi\tau_\gamma\ln\eps}.
\end{equation}
On this time scale, in the leading order, $\cphi$ is a harmonic function satisfying the same boundary condition (\ref{eq:bcond}). We get that on the $t^\prime$-timescale, the evolution of vortices satisfies
\begin{align}
	\vzd{k}(t^\prime)\,&=\,-\pd_{\vzb{k}}U\big(\vz{1},\ldots,\vz{N}\big).
\end{align}
Note, that the second term on the right-hand side of equation (\ref{eq:vort_last}) vanishes because on the $t^\prime$-timescale, $\delta\tilde E/\delta\cphi=0$.

\subsection{Equations for the moments and closures}
For the sake of completeness, here we derive equations for the moments of the orientation density (its Fourier coefficients) and formally perform a closure at the level of the first moment. This closure, even though sensible from the physical standpoint, does not have a valid mathematical justification, and does not occur in some well-defined asymptotic limit.

\paragraph{Equations for the moments.}
Let us define $k$-th moment (Fourier coefficient) of the orientations density $\od(\varphi,\vv z;t)$ as
\begin{equation}\label{eq:nk}
	\vv\dr^{(k)}(\vv z;t)\,\ass\,\int_0^{2\pi}\me^{\,2\mi k\varphi}\od(\varphi,\vv z;t)\md\varphi.
\end{equation}
The factor of 2 in $\me^{2\mi k\varphi}$ appears because, physically, in nematic systems, the orientations density is invariant with respect to inversion of liquid crystalline molecules, and thus $\od(\varphi)=\od(\varphi+\pi)$, and all the odd Fourier coefficients of the orientation density vanish. The first moment $\smash{\vv\dr^{(1)}}$ is exactly the order parameter field, $\vv\dr$, employed in our work. 

In order to obtain dynamic equations for $k$-th moment we can differentiate equation (\ref{eq:nk}) with respect to time $t$ and use evolution equation (\ref{eq:dyn_doi_expl}) for $\od$, obtaining
\begin{equation}\label{eq:moments}
	\pd_t{\vv{\dr}}^{(k)}(\vv z;t)\,=\,-\,4k^2\,\vv{\dr}^{(k)}\,+\,2k\,\left[\vv{\dr}^{(k-1)}\opL\vv\dr\,-\,\vv{\dr}^{(k+1)}\opL\bar{\vv\dr}\right].
\end{equation}
It is possible to rewrite these equations in a gradient form using the energy decomposition \eqref{eq:nrg_decomp}:
\begin{equation}\label{eq:momentsgrd}
	\pd_t\vv{\dr}^{(k)}(\vv z;t)\,=\,4k\,\left[\vv\dr^{(k+1)}\,\VD{\nNrg}{\vv\dr}\,-\,\vv\dr^{(k-1)}\,\VD{\nNrg}{\bar{\vv\dr}}\right]\,
	-\,2\mi k\int_0^{2\pi}\me^{\,2\mi k\varphi}\,\left[\VD{\sNrg(\od|\odn)}{\od}\right]_\varphi\,\od(\vv z,\varphi;t)\md\varphi
\end{equation}
Equations \eqref{eq:moments} or \eqref{eq:momentsgrd} form an infinite hierarchy, as equation for each $\vv\dr^{(k)}$ involves $\vv\dr^{(k+1)}$, etc. In order to obtain a closed system of equations for some first few moments, one must find a way to decouple this hierarchy by expressing the higher-order moments via the lower-order ones. This requires some additional assumptions on the orientation density $\od$.

\paragraph{Maximal entropy closure.}
A natural physical assumption is that the orientation density relaxes to its optimal configuration, $\odn$, given by \eqref{eq:fam2d}, which minimizes the relative entropy term in the energy (\ref{eq:nrg_decomp}). (Physical entropy is defined with a sign opposite to the one used here, and thus the name, ``maximal entropy.'') 
This allows us to calculate the higher-order moments in terms of $\vv\dr$ explicitly: 
\begin{equation}
	\vv\dr^{(k)}(\vv z)\,=\,\int_0^{2\pi}\me^{2\mi k\varphi}\od(\varphi,\vv z)\md\varphi\,=\,\frac{\Ibes_k(\Abes(\dr))}{\Ibes_0(\Abes(\dr))}
	\me^{\,\mi k\arg\vv\dr(\vv z)}.
\end{equation}
In particular, we find that 
\begin{equation}
	\vv\dr^{(2)}(\vv z)\,= \, \frac{\Ibes_2(\Abes(\dr))}{\Ibes_0(\Abes(\dr))} \me^{\,\mi 2\arg\vv\dr(\vv z)} \, =\, \frac{{\vv\dr}^2}{\dr^2} \left[ 1- \-\frac{2\dr}{\Abes(\dr)} \right].
\end{equation}
Now it is possible to close the hierarchy \eqref{eq:moments} at the level of the first moment:
\begin{align}\label{eq:nclosed}
	\pd_t{\vv{\dr}}(\vv z;t)\,
	&=\,-\,4\vv{\dr}\,+\,2\,\left[\opL\vv\dr\,-\,\frac{{\vv\dr}^2}{\dr^2}\left(1-\frac{2\dr}{\Abes(\dr)}\right)
	\opL\bar{\vv\dr}\right].
\end{align}
Similarly, the same closure in the gradient form may be obtained from \eqref{eq:momentsgrd}:
\begin{equation}\label{eq:momentsgrdclosed}
	\pd_t\vv{\dr}(\vv z;t)\,=\,\frac{{4\vv\dr}^2}{\dr^2} \left[ 1- \-\frac{2\dr}{\Abes(\dr)}\right] \,\VD{\nNrg}{\vv\dr}\,-\,4\VD{\nNrg}{\bar{\vv\dr}}.
\end{equation}
This equation is quite similar to the canonical Landau-de Gennes equation  (or Ginzburg-Landau equation) for the free energy dissipation in $L_2$ metric,
\begin{equation}\label{eq:LdGGL}
	\pd_t\vv{\dr}(\vv z;t)\,=\,-\VD{\nNrg}{\bar{\vv\dr}}.
\end{equation}
Curiously, \eqref{eq:momentsgrdclosed} becomes exactly \eqref{eq:LdGGL} (up to a time-scale change), when $\dr(\vv z)=\dr_\eq$; $\gamma=2$ and $\Abes(\dr_\eq)=2\dr_\eq$, which corresponds to the isotropic-nematic phase transition. This is exactly when the Landau expansion of the free energy is valid. We would like to stress though, that in general, this maximal entropy closure is only mathematically justifiable when the relative entropy term in (\ref{eq:nrg_decomp}) is penalized in some appropriate asymptotic limit. In such a limit, however, the dynamics prescribed by equations \eqref{eq:nclosed} or \eqref{eq:momentsgrdclosed} itself becomes singular and reduces to vortex dynamics, as explained in this work.
%
%%%%%%%%%%%%%%%%%%%%%%%%%%%%%%%%%%%%%%%%%%%%%%%%%%%%%%%
\appendix %%%%%%%%%%%%%%%%%%%%%%%%%%% BEGIN APPENDIX 
%%%%%%%%%%%%%%%%%%%%%%%%%%%%%%%%%%%%%%%%%%%%%%%%%%%%%%%
%
\section{Notation and some useful facts}\label{sec:somp}
We use {\bf bold face} font to denote complex-valued functions and variables; regular font for their absolute values, e.g., $z=|\vv z|$. Given $\vv z=x+\mi y$, the operators $\pd_{\vv z}$ and $\pd_{\bar{\vv z}}$ are defined as
\begin{equation}
	\pd_{\vv z}\,=\,\frac{1}{2}\big(\pd_x\,-\,\mi\pd_y\big),\qquad\pd_{\bar{\vv z}}\,=\,\frac{1}{2}\big(\pd_x\,+\,\mi\pd_y\big).
\end{equation}
The complex form of Stokes' theorem may be written as
\begin{equation}\label{eq:cintbyparts}
	\oint_{\pd\dom} f(\vv z,\bar{\vv z})\md{\vv z}\,=\,2\mi\int_\dom\pd_{\bar{\vv z}}f(\vv z,\bar{\vv z})\dvz,
\end{equation}
where integral on the left is counter-clockwise contour integral, and integral on the right is the usual area integral, i.e., $\dvz=\md x\md y$.
\paragraph{Useful identities} involving $\Ln\vv z\,=\,\ln z+\mi\arg\vv z$:
\begin{subequations}
\begin{align}
		\pd_{\vv z}\Ln\vv z\,&=\,\frac{1}{\vv z}\,=\,2\pd_{\vv z}\ln z\,=\,2\mi\pd_{\vv z}\arg\vv z;\qquad\pd_{\bar{\vv z}}\Ln\vv z\,=\,0;\\
		%\pd_{\bar{\vv z}}\Ln\vv z\,&=\,0;&&2\pd_{\bar{\vv z}}\ln|\vv z|\,=\,-2\mi\pd_{\bar{\vv z}}\arg\vv z\,=\,\frac{1}{\bar{\vv z}}.
		\md\,\arg\vv z\,&=\,\pd_{\vv z}\arg\vv z\md\vv z\,+\,\pd_{\bar{\vv z}}\arg\vv z\md\bar{\vv z}\,=\,\frac{x}{z}\md y\,-\,\frac{y}{z}\md x.
\end{align}
\end{subequations}
Note that integration with $\md\,\arg\vv z$ is well-defined in $\C\setminus\{0\}$, even though $\arg\vv z$ is a multivalued function with jump discontinuities on closed contours encircling the origin.

\separate 

\noindent Let us state a lemma which is used in estimation of some integrals; its proof is straightforward.
\begin{lemma}\label{lemma1}
\ Let $\dom \subset \C$ and $B_{R_\eps} (\vz0)$ be a disk of radius $R_\eps=\bO(\eps)$ centered at $\vz0$. Assume that $\mu(\eps) \to 0$ as $\eps \to 0$ and a sequence $f^\eps(z)$ satisfies $(1- \mu(\eps))\leq |f^\eps(\vv z)| \leq 1$ for all $\vv z\in\dom\setminus B_{R_\eps} (\vz0)$.  Then 
\begin{equation}
	\int_{\dom \setminus B_{R_\eps} (\vz0)} f^\eps(\vv z) \frac{\md v(\vv z)}{| \vv z - \vz0|^2} = - 2 \pi \ln \eps + \bO(1).
\end{equation}
\end{lemma}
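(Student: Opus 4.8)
The plan is to isolate the single source of the logarithmic divergence—the singularity of $|\vv z-\vz0|^{-2}$ at the excised core—and to show that the amplitude factor $f^\eps$ may be replaced by $1$ to leading order. I would introduce polar coordinates $(r,\theta)$ centered at $\vz0$, so that $|\vv z-\vz0|=r$ and $\md v(\vv z)=r\,\md r\,\md\theta$, fix a radius $\rho>0$ with $B_\rho(\vz0)\subset\dom$, and split the region of integration as $\dom\setminus B_{R_\eps}(\vz0)=\big(B_\rho(\vz0)\setminus B_{R_\eps}(\vz0)\big)\cup\big(\dom\setminus B_\rho(\vz0)\big)$, treating the singular annulus and the bounded far field separately.

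On the far field $\dom\setminus B_\rho(\vz0)$ the integrand is harmless: $|\vv z-\vz0|^{-2}\le\rho^{-2}$ and $|f^\eps|\le1$, so, $\dom$ being bounded, this piece contributes at most $|\dom|/\rho^2=\bO(1)$. On the annulus I would write $f^\eps=1+(f^\eps-1)$ and compute the leading term exactly,
\begin{equation}
	\int_{B_\rho(\vz0)\setminus B_{R_\eps}(\vz0)}\frac{\md v(\vv z)}{|\vv z-\vz0|^2}\,=\,\int_0^{2\pi}\!\!\int_{R_\eps}^{\rho}\frac{\md r}{r}\,\md\theta\,=\,2\pi\big(\ln\rho-\ln R_\eps\big)\,=\,-2\pi\ln\eps+\bO(1),
\end{equation}
where the last equality uses $R_\eps=\bO(\eps)$ (more precisely $R_\eps\asymp\eps$, so that $\ln R_\eps=\ln\eps+\bO(1)$) and that $\ln\rho$ is a fixed constant.

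It remains to control the remainder $\int(f^\eps-1)|\vv z-\vz0|^{-2}\,\md v$, and this is the one step that requires care and is, I expect, the crux of the estimate. In the real, positive case relevant to the application \eqref{eq:int1}, the hypothesis $1-\mu(\eps)\le|f^\eps|\le1$ on $\dom\setminus B_{R_\eps}(\vz0)$ gives the pointwise bound $|f^\eps-1|\le\mu(\eps)$, whence the remainder is dominated in absolute value by $\mu(\eps)$ times the same logarithmically divergent integral, i.e. by $\mu(\eps)\big(-2\pi\ln\eps+\bO(1)\big)$. The point to emphasize is that the excision of the $\bO(\eps)$-core is exactly what makes the amplitude uniformly close to $1$: inside $B_{R_\eps}(\vz0)$ the factor $f^\eps$ degenerates (it vanishes at the vortex), so the uniform bound holds only on the exterior. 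Since $\mu(\eps)\to0$, this remainder is of lower order than the leading $-2\pi\ln\eps$; absorbing it into the stated $\bO(1)$ error uses that $\mu(\eps)\ln\eps\to0$, which is supplied at the required rate by the Ginzburg--Landau estimate $|\dr^\eps-\dr_\eq|=\sO(1)$ outside the core, and in any case the leading coefficient $-2\pi$ is unaffected. Collecting the three contributions yields $-2\pi\ln\eps+\bO(1)$, as claimed.
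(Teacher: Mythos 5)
Your proof is correct, and since the paper omits the argument entirely (declaring it straightforward), the polar--coordinate splitting into a far field, an exact annulus integral $2\pi(\ln\rho-\ln R_\eps)$, and a remainder controlled by $\mu(\eps)$ times the same divergent integral is exactly the intended one. You are also right to flag the three points where the hypotheses must be read slightly more strongly than literally stated: $R_\eps$ must be comparable to $\eps$ from below as well as above (otherwise $\ln R_\eps\neq\ln\eps+\bO(1)$); $f^\eps$ must be real and positive, since a bound on $|f^\eps|$ alone would admit $f^\eps=-1$ and reverse the sign of the leading term; and $\mu(\eps)\to0$ by itself only yields a remainder of order $\mu(\eps)|\ln\eps|$, i.e.\ $o(|\ln\eps|)$, so the stated $\bO(1)$ error requires $\mu(\eps)|\ln\eps|=\bO(1)$, which does hold in the application where the lemma is invoked.
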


\separate

\paragraph{Special functions.}\label{sec:wgamma}

\begin{figure} 
\begin{center}
\raisebox{1mm}{\scalebox{0.45}{\includegraphics{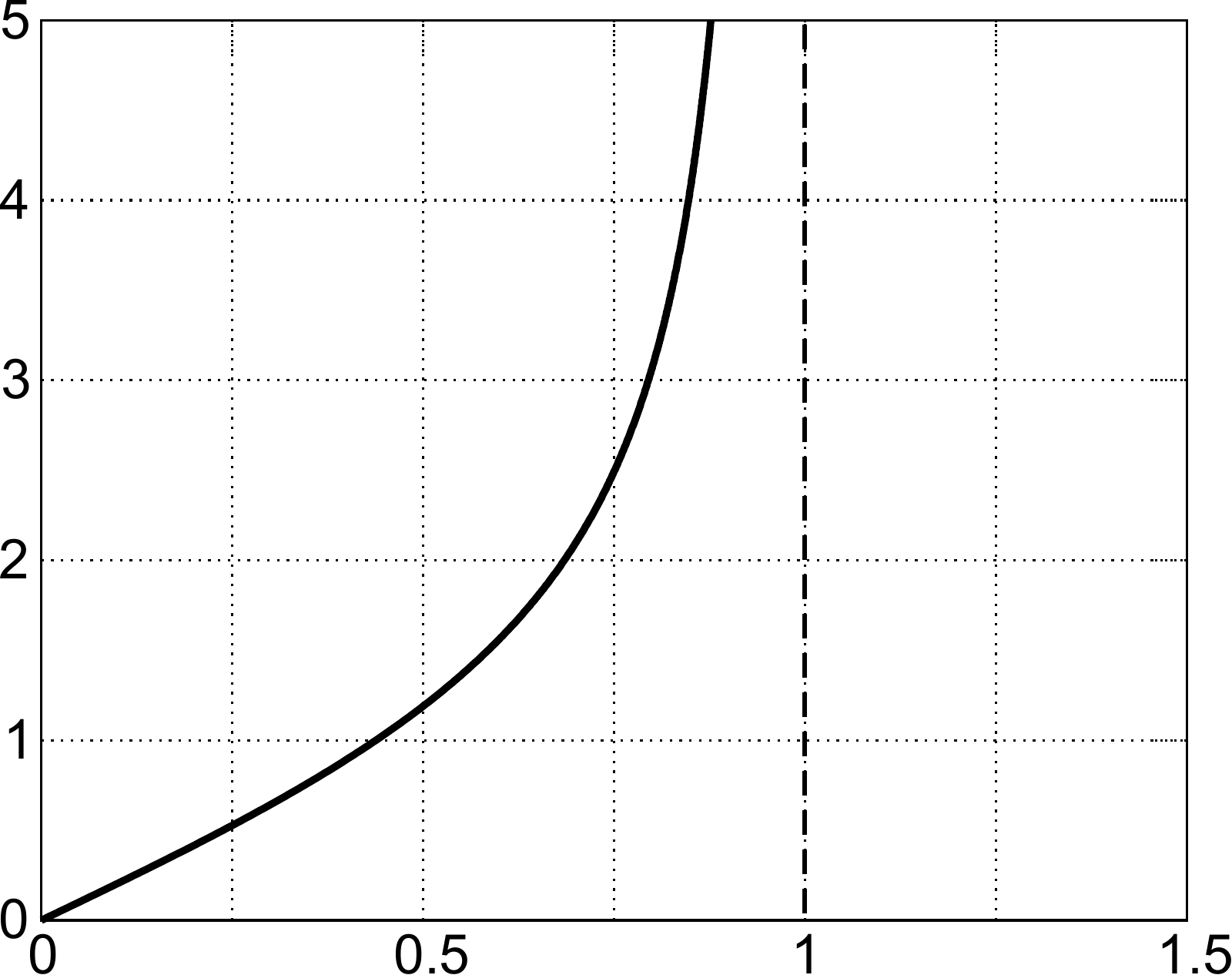}}}\hfill\scalebox{0.45}{\includegraphics{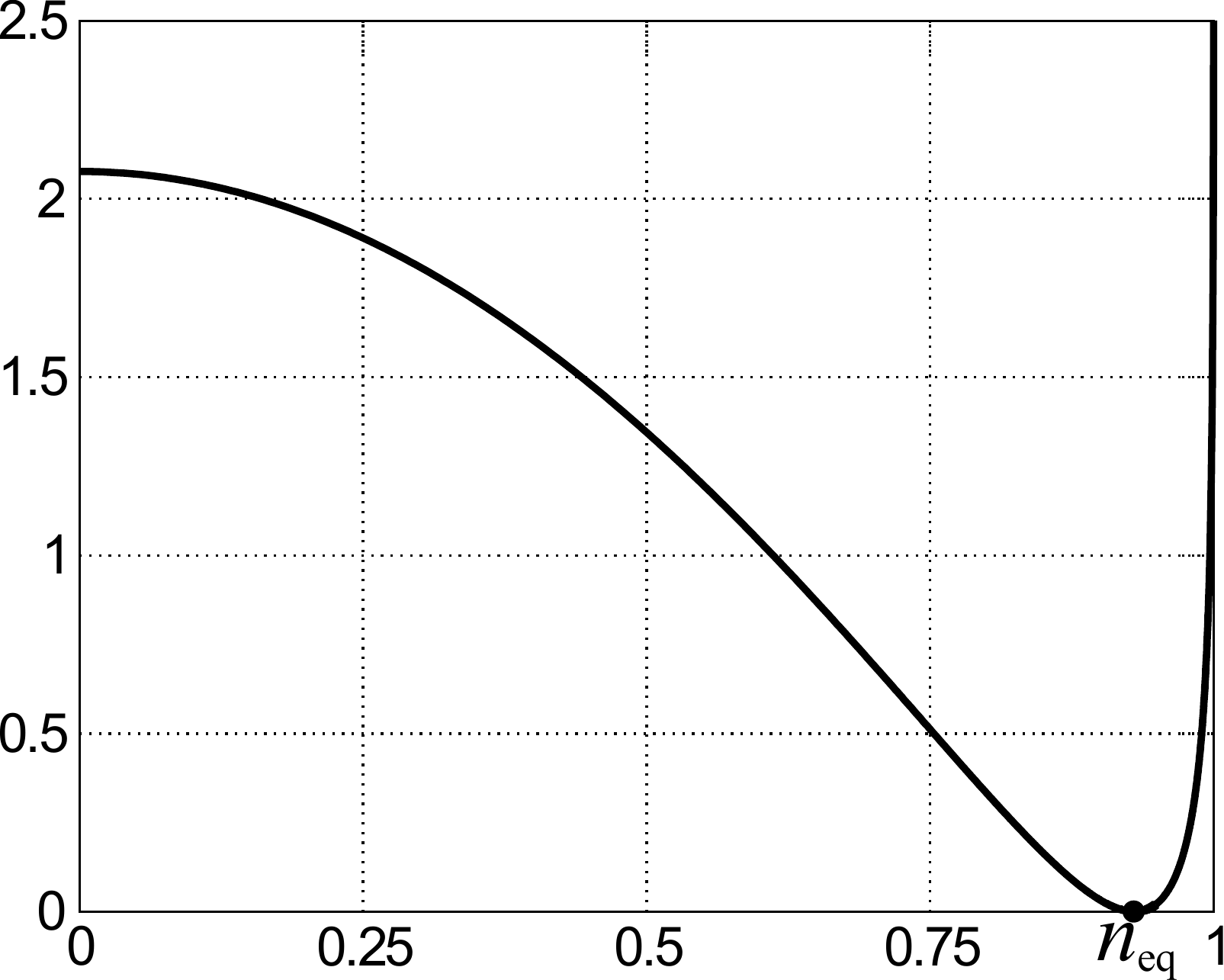}}
\end{center}
\caption{Graphs of the function $\Abes(n)$ (left) and the potential $W^\gamma(n)$ for $\gamma=6$ (right).}
\label{fig:graphs}
\end{figure}

In our work we use several special functions, such as the modified Bessel functions of the first kind, $\Ibes_{\nu}(\lambda)$, and the  function $\Abes(\dr)$. Here is a brief summary of their properties.
The function $\Abes (\dr)$ is the inverse of $\Ibes_1/\Ibes_0$, i.e., 
\begin{equation}
\frac{\Ibes_1\big(\Abes(\dr)\big)}{\Ibes_0\big(\Abes(\dr)\big)}\,=\,\dr.
\end{equation}
Using the properties of modified Bessel functions, it is straightforward to show that $\Abes (\dr)$ is a monotone increasing function defined on $(-1,1)$ with vertical asymptotes at $\dr =\pm1$; it is odd and convex when $\dr>0$. The graph of $\Abes (n)$ is shown in Figure~\ref{fig:graphs}. By direct computation we can verify that 
\begin{equation}\label{eq:app_ineqrad}
0 < 1\,-\,\frac{1}{\Ibes^2_0(\Abes(\dr))}\,\leq\,\min(1,C \dr^2),
\end{equation}
for some $C >0$ independent of $\dr$. This inequality is used in estimation of some of the integrals occurring in this paper.

The potential $W^\gamma(\dr)$ is given by
\begin{equation}\label{eq:pot_n1}
	W^{\gamma}(\dr)\,=\,-\,\frac{\gamma\dr^2}{2}\,+\,
	\big[\dr\Abes(\dr)\,-\,\ln\Ibes_0(\Abes(\dr))\big]\,+\,C_{\gamma},
\end{equation}
where $C_\gamma$ is chosen so that $W^\gamma (\dr) \geq 0$ with equality achieved at $\dr=\dr^\gamma_{\eq}$. The value of $\dr^\gamma_{\eq}$ satisfies $\smash{\gamma \dr^\gamma_\eq = \Abes \big(\dr^\gamma_\eq\big)}$. This equation has a nonzero solution for $\gamma>2$, which corresponds to the isotropic-nematic phase transition. The graph of $W^\gamma (\dr)$ is shown in Figure~\ref{fig:graphs}.

\separate

\noindent Here is a lemma which we use to invert the matrix $G$ in Section~\ref{sec:Doi}:
\begin{lemma}\label{lemma}
Let $\vv A$ be a symmetric block-matrix , representable, when $\delta\to0$, as
\begin{equation}
	\vv A\,=\,\frac{1}{\delta}\left[\begin{array}{cc}
		A^{}_{11}&0\\
		0~&0
	\end{array}
	\right]\,+\,\left[\begin{array}{cc}		
		B^{}_{11}&B^{}_{12}\\
		B^{}_{21}&B^{}_{22}
	\end{array}
	\right]\,+\,\delta\left[\begin{array}{cc}
		C^{}_{11}&C^{}_{12}\\
		C^{}_{21}&C^{}_{22}
	\end{array}
	\right]\,+\,\bO\big(\delta^2\big),
\end{equation}
where $A_{ii}$, $B_{ii}$, $C_{ii}$ symmetric matrices; $A_{11}$ and $B_{22}$ are invertible; $B^{}_{12}=B^\dag_{21}$, $C^{}_{12}=C^\dag_{21}$. Then its inverse, $\smash{\vv A^{-1}}$, exists for all sufficiently small $\delta$, and is given by
\begin{equation}\label{eq:app_mx1}
	\vv A^{-1}\,=\,
	\left[\begin{array}{cc}		
		0&0\\
		0&B^{-1}_{22}
	\end{array}
	\right]\,+\,\delta\left[\begin{array}{cc}
		A^{-1}_{11}&D^{}_{12}\\
		D^{}_{21}&D^{}_{22}
	\end{array}
	\right]\,+\,\bO\big(\delta^2\big),
\end{equation}
where $D^{}_{12}=-A^{-1}_{11}B^{}_{12}B^{-1}_{22}=D^\dag_{21}$, $D^{}_{22}=B^{-1}_{22}\big(C^{}_{22}-B^{}_{21}A^{-1}_{11}B^{}_{12}\big)B^{-1}_{22}$.

\end{lemma}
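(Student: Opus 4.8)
The plan is to partition every matrix into the given $2\times2$ block pattern and to invert $\vv A$ by the Schur-complement formula, pivoting on the $(1,1)$ block $\mathcal A\ass\delta^{-1}A_{11}+B_{11}+\delta C_{11}+\bO(\delta^2)$. First I would settle existence: writing $\mathcal A=\delta^{-1}A_{11}\bigl(I+\delta A_{11}^{-1}B_{11}+\bO(\delta^2)\bigr)$ and using the invertibility of $A_{11}$, a Neumann series shows $\mathcal A$ is invertible for all sufficiently small $\delta$ with $\mathcal A^{-1}=\delta A_{11}^{-1}+\bO(\delta^2)$. The Schur complement $S\ass\mathcal D-\mathcal B^\dagger\mathcal A^{-1}\mathcal B$, where $\mathcal B=B_{12}+\delta C_{12}+\bO(\delta^2)$ and $\mathcal D=B_{22}+\delta C_{22}+\bO(\delta^2)$, then satisfies $S\to B_{22}$ as $\delta\to0$ and is therefore invertible for small $\delta$ by the hypothesis on $B_{22}$; hence $\vv A$ is invertible, which disposes of the existence claim without any ad hoc argument.

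Next I would read off the blocks of $\vv A^{-1}$ from the standard formula
\[
\vv A^{-1}=\begin{bmatrix}\mathcal A^{-1}+\mathcal A^{-1}\mathcal B S^{-1}\mathcal B^\dagger\mathcal A^{-1}&-\mathcal A^{-1}\mathcal B S^{-1}\\-S^{-1}\mathcal B^\dagger\mathcal A^{-1}&S^{-1}\end{bmatrix}
\]
by expanding its ingredients to first order in $\delta$. The needed expansions are $\mathcal A^{-1}=\delta A_{11}^{-1}+\bO(\delta^2)$, so that $\mathcal B^\dagger\mathcal A^{-1}\mathcal B=\delta\,B_{21}A_{11}^{-1}B_{12}+\bO(\delta^2)$ and thus $S=B_{22}+\delta\bigl(C_{22}-B_{21}A_{11}^{-1}B_{12}\bigr)+\bO(\delta^2)$, giving $S^{-1}=B_{22}^{-1}-\delta\,B_{22}^{-1}\bigl(C_{22}-B_{21}A_{11}^{-1}B_{12}\bigr)B_{22}^{-1}+\bO(\delta^2)$. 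The $(2,2)$ block is $S^{-1}$, which produces the leading term $B_{22}^{-1}$ and the $\bO(\delta)$ coefficient $D_{22}$; the off-diagonal block $-\mathcal A^{-1}\mathcal B S^{-1}=-\delta\,A_{11}^{-1}B_{12}B_{22}^{-1}+\bO(\delta^2)$ produces $D_{12}=-A_{11}^{-1}B_{12}B_{22}^{-1}$; and since $\mathcal A^{-1}\mathcal B S^{-1}\mathcal B^\dagger\mathcal A^{-1}=\bO(\delta^2)$, the $(1,1)$ block equals $\mathcal A^{-1}+\bO(\delta^2)=\delta A_{11}^{-1}+\bO(\delta^2)$, with vanishing leading term and $\bO(\delta)$ coefficient $A_{11}^{-1}$. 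The identities $D_{21}=D_{12}^\dagger$ and the symmetry of $D_{22}$ follow at once from the symmetry of $\vv A$, that is, from $B_{12}=B_{21}^\dagger$ and $C_{22}^\dagger=C_{22}$.

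A self-contained alternative, avoiding the Schur formula, is to write $\vv A=\delta^{-1}\mathsf A+\mathsf B+\delta\mathsf C+\bO(\delta^2)$ with $\mathsf A=\mathrm{diag}(A_{11},0)$ and $\mathsf B,\mathsf C$ the assembled $\bO(1)$ and next-order block matrices, posit the ansatz $\vv A^{-1}=\mathsf P+\delta\mathsf Q+\bO(\delta^2)$, and match powers of $\delta$ in $\vv A\,\vv A^{-1}=I$. The $\delta^{-1}$ relation $\mathsf A\,\mathsf P=0$ forces $\mathsf P_{11}=\mathsf P_{12}=0$, and symmetry then kills $\mathsf P_{21}$, leaving $\mathsf P=\mathrm{diag}(0,\mathsf P_{22})$; the $\delta^0$ relation $\mathsf A\,\mathsf Q+\mathsf B\,\mathsf P=I$ fixes $\mathsf P_{22}=B_{22}^{-1}$, $\mathsf Q_{11}=A_{11}^{-1}$, and $\mathsf Q_{12}=D_{12}$; and the $(2,1)$ and $(2,2)$ blocks of the $\delta^1$ relation, in which the singular pivot contributes nothing, fix $\mathsf Q_{21}=D_{21}$ and $\mathsf Q_{22}=D_{22}$. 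In either route the only real work is the $\bO(\delta)$ bookkeeping, and I expect no genuine obstacle beyond it, consistent with this being a purely algebraic perturbation statement. The one place demanding care is the sign and operator ordering of the $(2,2)$ block at order $\delta$: a one-line scalar check (with $1\times1$ blocks) pins the coefficient down as $-B_{22}^{-1}\bigl(C_{22}-B_{21}A_{11}^{-1}B_{12}\bigr)B_{22}^{-1}$, so this sign is worth confirming against the stated $D_{22}$. The conceptually important feature is simply that $\mathcal A^{-1}=\bO(\delta)$: this is what demotes the strongly singular $\delta^{-1}A_{11}$ direction to an $\bO(\delta)$ block of the inverse, and hence why $A_{11}^{-1}$ surfaces at first rather than zeroth order in $\delta$.
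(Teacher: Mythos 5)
Your proof is correct, but it takes a genuinely different route from the paper's. The paper first establishes existence by expanding $\det\vv A$ along the rows of the singular pivot, obtaining $\det\vv A=\delta^{-d}\det A_{11}\det B_{22}+\bO\big(\delta^{1-d}\big)$, and then \emph{verifies} the stated expansion a posteriori: it multiplies $\vv A$ by the candidate inverse with the $(2,2)$ correction omitted, finds $\vv A\vv B=\vv I+\delta\vv C+\bO(\delta^2)$, and inverts the bracket by a Gershgorin--Neumann argument to recover the missing $D_{22}$ term. You instead \emph{derive} the expansion, either from the Schur-complement block-inversion formula pivoting on $\mathcal A=\delta^{-1}A_{11}+B_{11}+\cdots$, or by matching powers of $\delta$ in $\vv A\,\vv A^{-1}=\vv I$; both of your routes give existence for free (invertible pivot plus a Schur complement converging to the invertible $B_{22}$) and make transparent the key structural point that $\mathcal A^{-1}=\bO(\delta)$ is what demotes $A_{11}^{-1}$ to first order. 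The derivational approach buys you the answer without guessing it; the paper's is shorter once the answer is in hand.

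The sign you flag is a real discrepancy, and you are right. Your computation gives the order-$\delta$ coefficient of the $(2,2)$ block as $-B_{22}^{-1}\big(C_{22}-B_{21}A_{11}^{-1}B_{12}\big)B_{22}^{-1}$, the \emph{negative} of the stated $D_{22}$; a $1\times1$ check confirms this, and carrying out the final multiplication $\vv B\big(\vv I-\delta\vv C+\bO(\delta^2)\big)$ in the paper's own proof yields the same minus sign, since the only order-$\delta$ contribution to the $(2,2)$ block is $-B_{22}^{-1}\big(B_{21}D_{12}+C_{22}B_{22}^{-1}\big)=-B_{22}^{-1}\big(C_{22}-B_{21}A_{11}^{-1}B_{12}\big)B_{22}^{-1}$. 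So the lemma's $D_{22}$ should read $B_{22}^{-1}\big(B_{21}A_{11}^{-1}B_{12}-C_{22}\big)B_{22}^{-1}$; this is a typo in the statement rather than a flaw in your argument, and it does not affect the paper's conclusions, which use only the leading $(2,2)$ entry $B_{22}^{-1}$ and the diagonal $\bO(\delta)$ entry $A_{11}^{-1}$.
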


\begin{proof}
Computing the determinant of $\vv A$ via expansion with respect to rows corresponding to $A_{11}$, we obtain an asymptotic formula, $\smash{\det\vv A=\delta^{-d}\det A_{11}\det B_{22}+\bO\big(\delta^{1-d}\big)}$, where $d$ is the dimension of $A_{11}$. Because $A_{11}$ and $B_{22}$ are invertible, $\smash{\det\vv A^{-1}}\neq0$ for all sufficiently small $\delta$, and thus, within this range of $\delta$, $\vv A^{-1}$ exists.

Let us verify equation (\ref{eq:app_mx1}). Consider
\begin{equation}\label{eq:app_mx3}
	\vv B\,=\,
	\left[\begin{array}{cc}		
		0&0\\
		0&B^{-1}_{22}
	\end{array}
	\right]\,+\,\delta\left[\begin{array}{cc}
		A^{-1}_{11}&D^{}_{12}\\
		D^{}_{21}&0
	\end{array}
	\right].
\end{equation}
By direct computation we obtain,
\begin{equation}
	\vv A\vv B\,=\,\vv I\,+\,\delta\vv C
	\,+\,\bO(\delta^2);\qquad
	\vv C\,=\,
	\left[
	\begin{array}{cc}
		A^{-1}_{11}&B^{}_{11}D^{}_{12}+C^{}_{12}B^{-1}_{22}\\
		0&B^{}_{21}D^{}_{12}+C^{}_{22}B^{-1}_{22}
	\end{array}
	\right].
\end{equation}
Multiplying (on the left) both sides by $\vv A^{-1}$, we get
\begin{equation}\label{eq:app_mx2}
	\vv B\,=\,\vv A^{-1}\left\{\vv I\,+\,\delta\vv C\,+\,\bO(\delta^2)\right\}.
\end{equation}
By Gershgorin circle theorem, all eigenvalues of the matrix in curly brackets in equation (\ref{eq:app_mx2}) lie within $\bO(\delta)$ distance of 1, thus it is invertible, and its inverse is given up to $\bO\big(\delta^2\big)$ by $\vv I-\delta \vv C$. Therefore, $\smash{\vv A^{-1}\,=\vv B\big(\vv I-\delta \vv C+\bO(\delta^2)\big)\,}$, verifying the claim.
\end{proof}

%%%%%%%%%%%%%%%%%%%%%%%%%%%% END APPENDIX
%

\bibliographystyle{plain} %%%%%%%%%%%%%%%%%%%%%%%%%% BIBLIOGRAPHY
\bibliography{../bibliography/bibl}
\end{document}

%
%%%%%%%%%%%%%%%%%%%%%%%%%%%% TEXT ENDS 